\pgfplotsset{compat=newest} 
\pgfplotsset{plot coordinates/math parser=false}
\newif\iffull
\newcommand\blfootnote[1]{%
  \begingroup
  \renewcommand\thefootnote{}\footnote{#1}%
  \addtocounter{footnote}{-1}%
  \endgroup
}
\newcommand{\B}{{\rm\scriptscriptstyle  B}}
\newcommand{\BP}{{\rm\scriptscriptstyle  BP}}
\newcommand{\eps}{{\varepsilon}}
\newtheorem{theorem}{Theorem}
\begin{document}

\title{Spatially Coupled LDPC Codes Affected by a Single Random Burst of Erasures}

\author{\IEEEauthorblockN{Vahid Aref\IEEEauthorrefmark{1}, Narayanan Rengaswamy\IEEEauthorrefmark{2}, and Laurent Schmalen\IEEEauthorrefmark{1}}
\IEEEauthorblockA{
\IEEEauthorrefmark{1} Nokia Bell Labs, Stuttgart, Germany. (e-mail: \{\texttt{firstname.lastname}\}\texttt{@nokia-bell-labs.com})
\\
\IEEEauthorrefmark{2} Department
of Electrical and Computer Engineering, Duke University, Durham,
NC, 27708 USA}
}

\maketitle

\begin{abstract}
Spatially-Coupled LDPC (SC-LDPC) ensembles achieve the capacity of binary memoryless channels (BMS), asymptotically, under belief-propagation (BP) decoding. 
In this paper, we study the BP decoding of these code ensembles over a BMS channel and in the presence of a single random burst of erasures. 
We show that in the limit of code length, codewords can be recovered successfully if 
the length of the burst is smaller than some maximum recoverable burst length. 
We observe that the maximum recoverable burst length 
is practically the same if the transmission takes place over binary erasure channel or over binary additive white Gaussian channel with the same capacity. 
Analyzing the stopping sets,
we also estimate the decoding failure probability (the error floor) when the code length is finite.
\end{abstract}


\section{Introduction}
\label{sec:intro}

Low-density parity-check (LDPC) codes are widely used due to
their outstanding performance under low-complexity belief propagation (BP) decoding.  
However, an error probability exceeding that of
maximum-a-posteriori (MAP) decoding has to be tolerated with (sub-optimal) BP decoding. 
Recently, it has been empirically observed 
for spatially coupled LDPC (SC-LDPC) codes
--- first introduced  
as convolutional LDPC codes 
---
that the BP performance of 
these codes can improve dramatically towards the MAP performance of the 
underlying LDPC code under many different settings and conditions, e.g.~\cite{Lentmaier-ita09}.
This phenomenon, termed \emph{threshold saturation}, has been proven rigorously in \cite{Kudekar-it11,Kudekar-it13}. 
In particular, the BP threshold of a coupled LDPC ensemble tends to its MAP threshold on any binary memoryless symmetric
channel (BMS). 
\blfootnote{Parts of this work were conducted while N. Rengaswamy was visiting Bell Labs as a research intern funded by a scholarship of the DAAD-RisePro programme. The work of L. Schmalen was funded by the German Government in the frame of the CELTIC+/BMBF project SASER-SaveNet.}




Besides their excellent performance on the BEC and AWGN channels, much less is known about the burst error correctability of SC-LDPC codes. 
In~\cite{Jule-isit13}, SC-LDPC ensembles over a block erasure channel (BLEC) are considered with a channel that erases a complete spatial positions instead of individual bits. 
This block erasure model mimics block-fading channels frequently occurring in wireless communications. 
The authors give asymptotic lower and upper bounds for the bit and block erasure probabilities obtained from density evolution. 
Protograph-based codes that maximize the correctable burst lengths are constructed in~\cite{Iyengar-icc10}, while interleaving (therein denoted band splitting) is applied to a protograph-based SC-LDPC code in~\cite{Mori-corr15 } to increase the correctable burst length. 
If windowed decoding is used, this approach results however in an increased required window length and thus also in an increased complexity. 
Recently, it has been shown that protograph-based LDPC codes can increase the diversity order of block fading channels and are thus good candidates for block erasure channels~\cite{ulHassan-isit14,ulHassan-itw15}; however, they require large syndrome former memories if the burst length becomes large. Closely related structures based on protographs have been proposed in~\cite{Jardel-comnet15} which spatially couple the special class of root-check LDPC codes~\cite{Boutros-tit10} to  improve the finite length performance and thresholds.
 
In this paper, we are interested in the burst correction capabilities of general spatially coupled LDPC code ensembles as introduced in~\cite{Kudekar-it11}. We chose this ensemble as we know that it is capacity-achieving for BMS channels and therefore likely to be picked as potential candidate for various communication systems. We are in particular interested in knowing if besides their excellent performance on BMS channels, these codes also have advantages when subject to burst errors. In this paper, we extend our results of~\cite{RengaswamyZSC16}, where we derived tight lower bounds on the correctability of a long burst of erasures erasing either a complete spatial position or slightly more. In this paper, we investigate the maximum length of the correctable burst by utilizing density evolution to find thresholds on the correctable bursts in the asymptotic block-length regime. Additionally, we find expressions for the expected error floor in the non-asymptotic regime by counting small-size stopping sets. Finally, we verify all findings in a simulation example.

\section{Preliminaries}
\subsection{The Regular $(d_v,d_c,w,L,M)$ SC-LDPC Ensemble}
\label{sec:randomSCLDPC}

We now briefly review how to sample a code from a random regular $\mathcal{C}_{\mathcal{R}}(d_v,d_c,w,L,M)$ SC-LDPC ensemble~\cite{Kudekar-it11}. 
We first lay out a set of positions indexed from $z=1$ to $L$ on a \emph{spatial dimension}. 
At each spatial position (SP), $z$, there are $M$ variable nodes (VNs) and $M\frac{d_v}{d_c}$ check nodes (CNs), 
where $M\frac{d_v}{d_c} \in \mathbb{N}$ and, $d_v$ and $d_c$ denote the variable and check node degrees, respectively.
Let $w>1$ denote the smoothing (coupling) parameter. 
Then, we additionally consider $w-1$ sets of $M\frac{d_v}{d_c}$ CNs in SPs $L+1,\dots,L+w-1$. 
Every CN is equiped with $d_c$ ``sockets'' and imposes an even parity constraint on its $d_c$ neighboring VNs, connected via the sockets. 
Each VN in SP $z$ is connected to $d_v$ CNs in SPs $z,\dots,z+w-1$ as follows: 
each of the $d_v$ edges of this VN is allowed to randomly and uniformly connect to any of the $wMd_v$ sockets arising from the CNs in SPs $z,\dots,z+w-1$, such that parallel edges are avoided in the resulting bipartite graph. We avoid parallel edges as it turns out that for practical finite $M$, the presence of parallel edges can have detrimental effects on the models that we consider.
This graph represents the code so that we have $N=LM$ code bits, distributed over $L$ SPs. 
Note that the CNs at the boundary SPs, i.e., at SPs $1,\ldots,w-1$ and $L+1,\ldots,L+w-1$, can have degree less than $d_c$, due to the termination of the code and the absence of VNs outside SPs $1,\ldots,L$. 
Zero degree CNs are removed from the code.
Because of additional check nodes in SPs $z>L$, the code rate amounts $r = 1-\frac{d_v}{d_c}-\delta$, where $\delta=O(\frac{w}{L})$.
{Throughout this work, we assume the two mild conditions of $d_v \geq 3$ and $wM \geq 2(d_v+1)d_c$.}

\subsection{Burst Error Channel Model}

Due to impairments such as slow fading, carrier phase or frequency noise, the loss of a data frame, or the outage of a node in distributed storage,
a number of sequential received code bits may be severely distorted or erased. Depending on the channel of interest,
several error bursts with different lengths may occur
in a codeword. As a building structure of different models, 
we consider in this paper a single error burst of length $B=bM$ with a randomly chosen starting position. 
For simplicity, we assume that these bits are erased by the channel. Additionally, we assume that the transmission takes place over the
binary erasure channel (BEC), or over the binary-input additive white Gaussian noise (BiAWGN) channel. In a BEC($\eps$), the received bits outside of the burst are randomly
erased with probability $\eps$, otherwise received correctly.
      
\section{The Maximum Burst Length in the Asymptotic Block-length Regime}
 \label{sec:asymp}
Consider a $(d_v,d_c,w,L,M)$ SC-LDPC code used for transmission over a BEC($\eps$). Additionally, 
a random block of consecutive code bits is erased by
a burst of length $B=bM$. The starting bit of the burst, $S$, is uniformly chosen from code bits $[1,LM-bM+1]$.
For a given $S$, let $m_z$ denote the number of code bits erased by the burst and belonging to spatial position $z$.
Define $s=\frac{S}{M}$ and $z_0=\lceil s\rceil$. Then $(z_0-1)M< S\leq z_0M$ and 
\begin{align*}
m_z\! &=\! \left\{\!\!\begin{array}{ll}
0 & z < z_0 \\
\min\{bM,z_0M-sM+1\} & z = z_0 \\
\max\{0,\min\{M,bM\!+\!sM\!-1\!-(z\!-\!1)M\}\}& z>z_0\end{array}\right.
\end{align*}
Therefore, $\eps_z = \eps + \frac{m_z}{M}(1-\eps)$ is the average erasure probability of code bits in
spatial position $z$.
We use density evolution (DE) to evaluate the asymptotic performance of the code
ensemble $\mathcal{C}_{\mathcal{R}}(d_v,d_c,w,L,M)$ under BP decoding when $M\to\infty$~\cite{Kudekar-it11,Kudekar-it13}. This method estimates how the
empirical distribution of the log-likelihood ratio (LLR) of code bits at each position
$z$ evolves iteratively during BP decoding, given the empirical
distribution of the received bits' LLRs. For transmission over erasure channels, the
LLR distribution can be represented by a scalar value, the
erasure probability $\eps_z$, and the DE equation turns into a scalar
update recursion. In that case, the update equation becomes
\begin{equation*}
x^{(t+1)}_z = \eps_z \left(1-\frac{1}{w}\sum_{i=0}^{w-1}\left(1-\frac{1}{w}\sum_{j=0}^{w-1} x^{(t)}_{z+i-j}\right)^{d_c-1} \right)^{d_v-1}
\end{equation*}
where $x^{(t)}_z$ denote the average erasure probability of the outgoing messages from code bits in position $z$ and at iteration $t$. We initialize $x_z^{(0)}=1$ for all $z\in[1,L]$ and $x_z^{(t)}=0, t\geq 0$ otherwise. For a given $s$ and $b$, we hence have
\begin{equation*}
\eps_z=
\begin{cases}
\eps & z<\lceil s\rceil\\
\eps + (1-\eps)\min\{b,\lceil s\rceil -s\} & z=\lceil s\rceil\\
\eps + (1-\eps)\max\{0,\min\{1, b+s-z+1\}\} & z>\lceil s\rceil
\end{cases}
\end{equation*}

The average probability that a code bit is not recovered after $T$ iterations is given by
\begin{equation*}
P_{\rm e}(T,b,s) = \frac{1}{L}\sum_{z=0}^{L-1}\eps_z\! \left(1\!-\!\frac{1}{w}\sum_{i=0}^{w-1}(1\!-\!\frac{1}{w}\sum_{j=0}^{w-1} x^{(T)}_{z+i-j})^{d_c-1} \right)^{d_v}
\end{equation*}
Since for $M\to\infty$, $s$ is uniformly random over $[0,L-b]$, we have
\begin{equation*}
P_{\rm e}(T,b,\eps) =\frac{1}{L-b}\int_{0}^{L-b} P_{\rm e}(T,b,s)\text{d}s.
\end{equation*}
 We define the largest recoverable burst length $b_{\BP}$ as follows:
 \begin{equation}
 \label{eq:b_bp_bec}
 b_{\BP}(\eps)=\sup\{b\mid b>0, \lim_{T\to\infty} P_{\rm e}(T,b,\eps)=0\}.
 \end{equation}
 
We numerically compute $b_{\BP}(\eps)$ for the two ensembles 
$\mathcal{C}_{\mathcal{R}}(3,6,w,L)$ and $\mathcal{C}_{\mathcal{R}}(4,8,w,L)$ with $w=3,4,5$ and $L\gg w$. 
For a given $b$, we run DE and evaluate $P_{\rm e}(T,b,s)$
over all $s=k\Delta$, where $k\in\mathbb{N}$ and $\Delta=0.001$. The number of iterations $T$ is limited by the following stopping criterion:
\begin{equation*}
T=\min\left\{t\ \middle|\  \frac{1}{L}\sum_{z=1}^L |x_z^{(t-1)}-x_z^{(t)}|<10^{-5}\right\}.
\end{equation*}

Fig.~\ref{fig:b_bp36} shows $b_{\BP}(\eps)$ for $\mathcal{C}_{\mathcal{R}}(3,6,w,L)$ and $\mathcal{C}_{\mathcal{R}}(4,8,w,L)$. We observe that $b_{\BP}(\eps)$ is decreasing in terms of $\eps$ and it becomes zero
at $\eps_\BP^{(3,6,w,L)}\approx0.488$ and $\eps_\BP^{(4,8,w,L)}\approx0.497$. As one may expect, a longer burst can be recovered as the gap to the BP threshold of the ensemble on a BEC without bursts increases. Moreover, we observe that $b_{\BP}(\eps)$ is increasing in $w$ but is decreasing as $d_c$ increases. 

\begin{figure}[tb!]
\begin{tikzpicture}
\begin{scope}

\begin{axis}[%
width=1.0\columnwidth,
height=0.65\columnwidth,
every axis/.append style={font=\small},
every x tick label/.append style={font=\footnotesize},
every y tick label/.append style={font=\footnotesize},
xlabel={$\varepsilon, 1-C(N_0)$},
xmajorgrids,
xmin=0,
xmax=.5,
ymin=0,
ymax=3,
yminorticks=true,
ylabel={$b_{\rm\scriptscriptstyle  BP}$},
ymajorgrids,
yminorgrids,
legend style={legend cell align=left,align=left,draw=black,font=\scriptsize},
]

\definecolor{mycolor1}{rgb}{0.00000,0.44700,0.74100}
\definecolor{mycolor2}{rgb}{0.85000,0.32500,0.09800}
\definecolor{mycolor3}{rgb}{0.92900,0.69400,0.12500}
\definecolor{mycolor4}{rgb}{0.49400,0.18400,0.55600}
\definecolor{mycolor5}{rgb}{0.46600,0.67400,0.18800}

\addplot [color=mycolor1,thick,solid,mark=none]
table[x=eps,y=bec363] {de_rbc_bec_36.txt};
\addlegendentry{ BEC, $w=3$};

\addplot [color=mycolor2,thick,solid,mark=none]
table[x=eps,y=bec364] {de_rbc_bec_36.txt};
\addlegendentry{ BEC, $w=4$};

\addplot [color=mycolor3,thick,solid,mark=none]
table[x=eps,y=bec365] {de_rbc_bec_36.txt};
\addlegendentry{  BEC, $w=5$};

\addplot [color=mycolor1,thick,dashed,mark=none]
table[x=h,y=bawgn363] {de_rbc_bawgn_36.txt};
\addlegendentry{  BiAWGN, $w=3$};

\addplot [color=mycolor2,thick,dashed,mark=none]
table[x=h,y=bawgn364] {de_rbc_bawgn_36.txt};
\addlegendentry{ BiAWGN, $w=4$};

\addplot [color=mycolor3,thick,dashed,mark=none]
table[x=h,y=bawgn365] {de_rbc_bawgn_36.txt};
\addlegendentry{ BiAWGN, $w=5$};

\node at (.05,.2) {$(a)$};

\end{axis}

\end{scope}

\begin{scope}[yshift=-0.59\columnwidth]
\begin{axis}[%
width=1.0\columnwidth,
height=0.65\columnwidth,
every axis/.append style={font=\small},
every x tick label/.append style={font=\footnotesize},
every y tick label/.append style={font=\footnotesize},
xlabel={$\varepsilon, 1-C(N_0)$},
xmajorgrids,
xmin=0,
xmax=.5,
ymin=0,
ymax=3,
yminorticks=true,
ylabel={$b_{\rm\scriptscriptstyle  BP}$},
ymajorgrids,
yminorgrids,
legend style={legend cell align=left,align=left,draw=black,font=\scriptsize},
]

\definecolor{mycolor1}{rgb}{0.00000,0.44700,0.74100}
\definecolor{mycolor2}{rgb}{0.85000,0.32500,0.09800}
\definecolor{mycolor3}{rgb}{0.92900,0.69400,0.12500}
\definecolor{mycolor4}{rgb}{0.49400,0.18400,0.55600}
\definecolor{mycolor5}{rgb}{0.46600,0.67400,0.18800}

\addplot [color=mycolor1,thick,solid,mark=none]
table[x=eps,y=bec483] {de_rbc_bec_48.txt};
\addlegendentry{ BEC, $w=3$};

\addplot [color=mycolor2,thick,solid,mark=none]
table[x=eps,y=bec484] {de_rbc_bec_48.txt};
\addlegendentry{ BEC, $w=4$};

\addplot [color=mycolor3,thick,solid,mark=none]
table[x=eps,y=bec485] {de_rbc_bec_48.txt};
\addlegendentry{  BEC, $w=5$};

\addplot [color=mycolor1,thick,dashed,mark=none]
table[x=h,y=bawgn483] {de_rbc_bawgn_48.txt};
\addlegendentry{  BiAWGN, $w=3$};

\addplot [color=mycolor2,thick,dashed,mark=none]
table[x=h,y=bawgn484] {de_rbc_bawgn_48.txt};
\addlegendentry{ BiAWGN, $w=4$};

\addplot [color=mycolor3,thick,dashed,mark=none]
table[x=h,y=bawgn485] {de_rbc_bawgn_48.txt};
\addlegendentry{ BiAWGN, $w=5$};

\node at (.05,.2) {$(b)$};
\end{axis}

\end{scope}

\end{tikzpicture}%
\caption{\label{fig:b_bp36} The BP recoverable burst length $b_{\BP}$ for $(a)$ $\mathcal{C}_\mathcal{R}(3,6,w,L)$ and $(b)$ $\mathcal{C}_\mathcal{R}(4,8,w,L)$ ensembles. Solid lines are when the transmission is over the BEC, and dashed lines are when the transmission is over the BiAWGN.}
\end{figure}
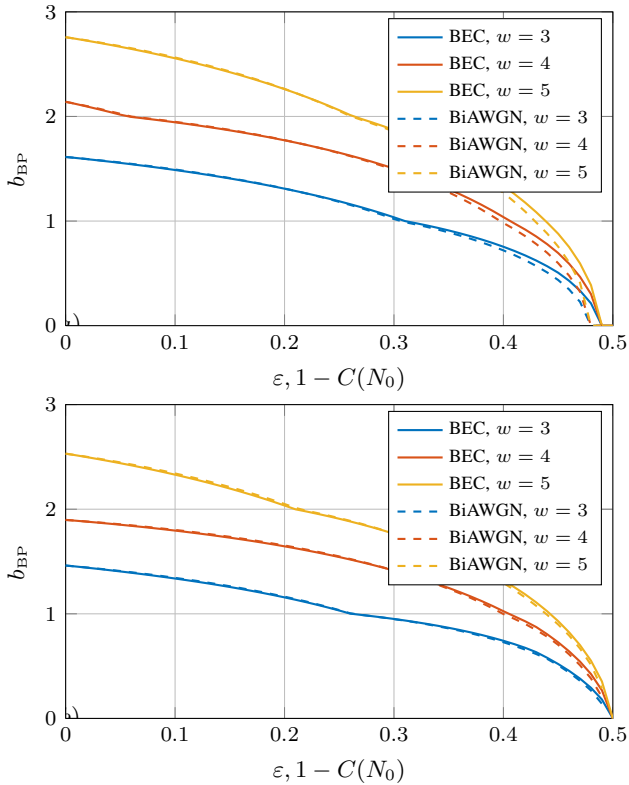

Now we consider the transmission over the BiAWGN channel with an
additional burst of erasures. We assume that the received bits have signal-to-noise ratio (SNR) $10\log_{10}(2/N_0)$ dB.
We again use DE to compute the recoverable burst length $b_{\BP}$ when $M\to\infty$. The DE equations for
SC-LDPC codes over a BMS channel are detailed in \cite{Kudekar-it13}. For a given $s$ and $b$, 
the received bits in spatial position $z$ are erased with probability $\frac{m_z}{M}$, or distorted by the Gaussian noise. Thus, the LLR distribution of received bits in each spatial position is the convex combination of two LLR distributions: the  distribution of BiAWGN channel and the distribution of erased bits. 

For a given SNR, we can define $b_{\BP}(N_0)$ similar to~\eqref{eq:b_bp_bec}. 
To numerically compute $b_{\BP}(N_0)$, we use the DE method of~\cite[App. B]{Richardson-MCT08} in which 
the quantized LLR distributions are updated recursively. 
 For a given $b$,
we run DE over all $s=k\Delta$, and $\Delta=0.01$. We also use a similar stopping criterion as for the BEC.

Fig.~\ref{fig:b_bp36} also shows $b_{\BP}(N_0)$ for $\mathcal{C}_{\mathcal{R}}(3,6,w,L)$ and $\mathcal{C}_{\mathcal{R}}(4,8,w,L)$. To have a fair comparison with BEC, we plot $b_{\BP}(N_0)$ in terms of $1-C(N_0)$, where $C(N_0)$ is the capacity of BiAWGN channel with SNR $10\log_{10}(2/N_0)$ dB. We observe that $b_{\BP}(\eps)$ and 
$b_{\BP}(N_0)$ are \textit{almost} equal when $C(N_0)=1-\eps$. Note that the deviation of both curves for SNR values close to the BP threshold is mainly because the quantization level of  the LLR distributions was not small enough for those SNR values. However, it is very unlikely that $b_{\BP}(\eps)$ and $b_{\BP}(N_0)$ are exactly equal as the BP threshold of these ensembles over the BEC and BiAWGN channel without burst errors are not equal either (but they are very close).  

\textbf{Remark 1:} We defined $b_{\BP}(\eps)$ in \eqref{eq:b_bp_bec} based on average 
``bit error probability''. In general, it gives an upper bound for the maximum recoverable burst length that the ``block error probability'' will converge to zero.
However, the simulation results in the next section suggest the tightness of upper-bound 
when $d_v\geq 3$.

\textbf{Remark 2:} In simulations of both BEC and BiAWGN channel,
we numerically observe that for any $b>b_{\BP}(\eps)$, 
$P_{\rm e}(b,\lceil s\rceil)\geq P_{\rm e}(b,s)$. It suggests that the worst case scenario is $s=\lceil s\rceil$, i.e. $z_0$ is fully erased.

\subsection*{Conditions on $w$ for $1\leq b_{\BP}(0)\leq k$:} 
Assume $b=1$ and $s=\lceil s\rceil\in \mathbb{Z}$ without further random noise ($\eps=0$), then the DE equation is simplified to
\begin{equation*}
x_{s}^{(t+1)}= (1-(1-\frac{1}{w}x_{s}^{(t)})^{d_c-1})^{d_v-1},
\end{equation*}
which is the DE equation of $(d_v,d_c)$ LDPC ensemble over a BEC with erasure probability $\frac{1}{w}$.
It implies that $P_{\rm e}(T,b=1)\neq 0$, if $\frac{1}{w}>\eps_{\BP}^{(d_v,d_c)}$, the BP threshold of the underlying $(d_v,d_c)$ LDPC ensemble. Thus, the necessary condition for
$1 \leq b_{\BP}(\eps)$ is $w\geq \lceil 1/\eps_{\BP}^{(d_v,d_c)} \rceil$.

On the other hand, $w\geq \lceil (k+1)/\eps_{\BP}^{(d_v,d_c)} \rceil$ is a sufficient condition (but not tight) for  $b_{\BP}(0)\leq k$, where $k$ is integer. 
The steps of proof are: $(i)$ 
a burst $(b_{\BP}(0),s)$ is a better channel than
a burst $(b=k+1,\lceil s\rceil)$, $(ii)$ Using DE equation for the latter burst, $\sum_{z=\lceil s\rceil}^{\lceil s\rceil+k+1}x^{(t)}_z$ can be upper-bounded by the DE equation of $(d_v,d_c)$ LDPC ensemble over BEC$(\frac{k+1}{w})$.


\section{The Burst Length in the Finite Block Length Regime}

In the limit of $M$, we observe that the decoding failure probability has a sharp transition  from zero to one as the length of burst erasures increases. In particular, $b_{\BP}(\eps)$ is the BP threshold of the combined channel of BEC$(\eps)$ and burst erasures. For a finite $M$, the decoding failure probability comprises two parts: the waterfall region for $b$ values close to $b_{\BP}(\eps)$, and the error-floor region for $b\ll b_{\BP}(\eps)$. This behaviour is illustrated in Fig.~\ref{fig:finite363}, which shows simulation results for a $\mathcal{C_R}(d_v=3,d_c=6,w=3,L=30,M)$ ensemble for $\varepsilon=0$. 
In this section, we estimate the error floor part by enumerating the size-2 \emph{stopping sets} as a function of $M$.

A subset $\mathcal{A}$ of  VNs in a code is a \emph{stopping set} if all the neighboring CNs of (the VNs in) $\mathcal{A}$ connect to $\mathcal{A}$ at least twice~\cite{Richardson-MCT08}. 
In such a case, if all VNs in $\mathcal{A}$ have been erased by the channel, then the BP decoder will fail as all the neighboring CNs are connected to at least two erased VNs. 

For simplicity, we assume here that we have only burst erasures, i.e. $\eps=0$ in Fig.~\ref{fig:b_bp36}.
The results can be later extended for the combined channel of BEC and burst erasures. We first focus on size-2 stopping sets as these dominate the performance in the error floor region~\cite{RengaswamyZSC16}.

\subsection{Size-2 Stopping Sets}

The random burst can span over multiple spatial positions because of its random starting position $sM$, and its potentially
large length $bM$. A size-2 stopping set can be formed within a single spatial
position or across coupled spatial positions. We first compute the probability of such a stopping set:


\begin{theorem}\label{thm:prob_rbc}
Consider the $C_{\mathcal{R}}(d_v,d_c,w,L,M)$ ensemble. Let $v_i$ denote a randomly chosen VN in spatial position 
$z\leq L$, and  $v_j$ denote a random VN in spatial position $z+k$, for a non-negative integer $k$ with $z+k\leq L$.
The probability that these two random VNs form a stopping set of size 2 is independent of $z$ and
amounts to
\begin{align}\label{eq:prob_rbc}
q_k=P_{\mathcal{R}}\left(1-\frac{k}{w}\right)^{d_v}, \quad k\in\{0,1,\ldots,w-1\},
\end{align}
where $P_{\mathcal{R}}$ is 
\begin{align}
P_{\mathcal{R}} \doteq  
\frac{\left(1-\frac1{d_c}\right)^{d_v}}{\sum\limits_{\ell=0}^{d_v} \binom{d_v}{\ell}
\binom{wM\frac{d_v}{d_c}-d_v}{d_v-\ell} \left(1-\frac1{d_c}\right)^\ell }. 
\label{eq:lem1_pdef}
\end{align}
For $k\geq w$, we have $q_k=0$.
\end{theorem}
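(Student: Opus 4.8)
The plan is to first turn the stopping-set condition into a purely combinatorial statement about the two neighbourhoods, and then compute the probability by revealing the edges of $v_i$ first and those of $v_j$ afterwards. \textbf{Characterization.} Since the ensemble contains no parallel edges, each of $v_i,v_j$ has exactly $d_v$ \emph{distinct} CN neighbours. For a CN $c$ adjacent to $\{v_i,v_j\}$ the number of edges joining $c$ to $\{v_i,v_j\}$ equals $\mathbbm{1}[c\in N(v_i)]+\mathbbm{1}[c\in N(v_j)]\in\{1,2\}$, and the stopping-set condition forces this to be $2$ for every such $c$; conversely, if $N(v_i)=N(v_j)$ each CN in that common set sees two erased VNs and every other CN sees none. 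Hence $\{v_i,v_j\}$ is a size-$2$ stopping set if and only if $N(v_i)=N(v_j)$, i.e.\ the two VNs share all $d_v$ CN neighbours.

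\textbf{The support constraint and the case $k\ge w$.} A VN in SP $z$ only attaches to CNs in SPs $z,\dots,z+w-1$, and one in SP $z+k$ only to SPs $z+k,\dots,z+k+w-1$, so a common neighbour set must lie in the intersection $\{z+k,\dots,z+w-1\}$. This is empty when $k\ge w$, giving $q_k=0$; for $k\le w-1$ the common set of $d_v$ CNs must be contained in these $w-k$ overlapping SPs. Because every SP carries $M\frac{d_v}{d_c}$ CNs with $d_c$ sockets each, a full VN range always contains $wM\frac{d_v}{d_c}$ CNs (equivalently $wMd_v$ sockets) irrespective of whether $z$ is near the array boundary — the reduced boundary CN degrees merely leave some sockets unused — so all counts below are $z$-independent, which already yields the claimed independence of $q_k$ from $z$ (the hypothesis $wM\ge 2(d_v+1)d_c$ moreover guarantees the binomials appearing later are well defined and positive).

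\textbf{The probability for $k\le w-1$.} Condition on the neighbour set $S=\{c_1,\dots,c_{d_v}\}$ of $v_i$. By exchangeability of the sockets in $v_i$'s range, $S$ behaves like a uniform $d_v$-subset of the $wM\frac{d_v}{d_c}$ CNs there; in particular each edge of $v_i$ lands in a given SP of its range with probability $\frac1w$, so the probability that all $d_v$ edges land in the $w-k$ overlapping SPs is $\left(1-\frac{k}{w}\right)^{d_v}$. Given $N(v_i)=S$ with $S$ inside the overlap, reveal the $d_v$ edges of $v_j$ into the sockets of its range not already used by $v_i$; for a candidate neighbour set $T$ of $v_j$ with $|T\cap S|=\ell$, the $\ell$ CNs of $T$ lying in $S$ have $d_c-1$ free sockets each while the other $d_v-\ell$ have $d_c$, so the number of socket configurations realizing $T$ is proportional to $\left(1-\frac1{d_c}\right)^{\ell}$. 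Enumerating all $T$ by choosing the $\ell$ reused CNs out of $S$ and the $d_v-\ell$ new CNs out of the remaining $wM\frac{d_v}{d_c}-d_v$ CNs of $v_j$'s range gives
\[
P\!\left(N(v_j)=S\mid N(v_i)=S\right)=\frac{\left(1-\tfrac1{d_c}\right)^{d_v}}{\sum_{\ell=0}^{d_v}\binom{d_v}{\ell}\binom{wM\frac{d_v}{d_c}-d_v}{d_v-\ell}\left(1-\tfrac1{d_c}\right)^{\ell}}=P_{\mathcal R},
\]
which is the same for every admissible $S$ and does not depend on $k$. Multiplying the two factors and summing over $S$ yields $q_k=P_{\mathcal R}\left(1-\frac{k}{w}\right)^{d_v}$.

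\textbf{Main obstacle.} The conceptual steps are the characterization and the set-up above; the delicate part is justifying the two simplifications that make the last step exact-looking, namely (i) that the joint law of $(N(v_i),N(v_j))$ in the configuration model \emph{with} the no-parallel-edge constraint is obtained by placing $v_i$'s edges uniformly and then $v_j$'s edges uniformly into the remaining sockets — exact up to the influence of the constraint on the other VNs, which is what the ``$\doteq$'' in \eqref{eq:lem1_pdef} absorbs — and (ii) the exact socket bookkeeping: that $v_i$ occupies precisely one socket in each of its $d_v$ neighbours, that the reduced CN degrees near the array ends do not alter the relevant socket counts, and that the two factors genuinely multiply because $P_{\mathcal R}$ is insensitive to where in $v_j$'s range the shared CNs lie. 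Everything else is routine enumeration.
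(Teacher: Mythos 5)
Your proof is correct and follows essentially the same route as the paper: the characterization $N(v_i)=N(v_j)$, the factor $\left(1-\frac{k}{w}\right)^{d_v}$ from forcing $v_i$'s edges into the $w-k$ overlapping spatial positions, and the matching probability $P_{\mathcal{R}}$ obtained by enumerating $v_j$'s possible neighbour sets by overlap size $\ell$ with weights $(d_c-1)^\ell d_c^{d_v-\ell}$. The only cosmetic difference is that you phrase the second factor as a conditional probability via sequential edge revelation, whereas the paper counts subgraph configurations $T_{ss}/T$ (with the cancelling $d_v!$); your explicit remarks on $z$-independence at the boundary and on what the ``$\doteq$'' absorbs are consistent with, and slightly more careful than, the paper's argument.
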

\begin{proof}
Let $\mathcal{N}(v_i)$ denote the set of $d_v$ check nodes connected to VN $v_i$.
Recall that this ensemble contains no parallel edges.
From Section~\ref{sec:randomSCLDPC}, we know that $\mathcal{N}(v_i)$ can have contributions from SPs $\{z,z+1,\dots,z+w-1\}$ and 
$\mathcal{N}(v_j)$ can have contributions from SPs $\{z+k,z+k+1,\dots,z+k+w-1\}$.
A size-2 stopping set 
is formed if and only if $\mathcal{N}(v_i)=\mathcal{N}(v_j)$.
For $k\geq w$, this condition cannot be fulfilled and thus, $q_k=0$.
For $k<w$, all check nodes of $\mathcal{N}(v_i)$ must be lying in a subset $\{z+k,\dots,z+w-1\}$. As the edges of the variable nodes uniformly connect to $w$ neighboring SPs, the probability of such a selection for $v_i$ is $(\frac{w-k}{w})^{d_v}$. 
Now, we compute the the probability that $v_j$ connects exactly to the same CNs as $v_i$, 
i.e., $\mathcal{N}(v_i)=\mathcal{N}(v_j)$.
We label all the sockets of CNs in SPs $\{z+k,\dots,z+k+w-1\}$. Let $T$ denote the total number of sub-graphs from $\{v_i,v_j\}$ and let $T_{ss}$ denote the number of sub-graphs in which these VNs form a size-$2$ stopping set. Each of the CNs in $\mathcal{N}(v_i)$ has $d_c-1$ free distinct sockets.
Thus, the number of sub-graphs fulfilling $\mathcal{N}(v_i)=\mathcal{N}(v_j)$ is, 
\begin{equation*}\textstyle
T_{ss}  = d_v! (d_c-1)^{d_v},
\end{equation*}
where $d_v!$ is due to the permutation of edges and $(d_c-1)^{d_v}$ is due to the different ways of 
connecting to free sockets of $\mathcal{N}(v_i)$.
In general, $v_i$ and $v_j$ may connect to some $\ell$ common CNs, $0 \leq \ell\leq d_v$. On one hand, there are $\binom{d_v}{\ell}(d_c-1)^\ell$ socket selections for the $\ell$ common CNs. One the other hand, there are $\binom{wM\frac{d_v}{d_c}-d_v}{d_v-\ell}d_c^{d_v-\ell}$ for all other distinct $wM\frac{d_v}{d_c}-d_v$ CNs. Including $d_v!$ permutation of edges, we have,
\begin{equation*}\textstyle
T=d_v!\sum\limits_{\ell=0}^{d_v} \binom{d_v}{\ell}\binom{wM\frac{d_v}{d_c}-d_v}{d_v-\ell} (d_c-1)^\ell (d_c)^{d_v-\ell}.
\end{equation*}
We get $P_{\mathcal{R}} = \frac{T_{ss}}{T}$, simplified further to \eqref{eq:lem1_pdef}, and
hence, $q_k=\left(\frac{w-k}{w}\right)^{d_v}P_{\mathcal{R}}$.
\end{proof}

Let $\mathbb{N}_2$ denote the number of size-$2$ stopping sets in a random code instance of $C_{\mathcal{R}}(d_v,d_c,w,L,M)$ ensemble. We introduce the stopping set indicator function $U_{ij}$ with
\begin{equation*}
U_{ij} = \left\{\begin{array}{ll}
1 & \text{if VNs }v_i\text{ and }v_j\text{ form a stopping set}\\
0 & \text{otherwise}
\end{array}\right.
\end{equation*}
Then, $\mathbb{N}_2=\sum_{i=1}^{ML-1}\sum_{j=i+1}^{ML} U_{ij}$. Note that $U_{ij}$ are correlated random
variables. However, we can simply calculate the average number of size-2 stopping sets over the ensemble,
\begin{align*}
\textstyle
\mathbb{E}[\mathbb{N}_2] &=\sum_{i=1}^{ML-1}\sum_{j=i+1}^{ML} \mathbb{E}[U_{ij}]= \sum_{i=1}^{ML-1}\sum_{j=i+1}^{ML} P\{U_{ij}=1\}\\
&=\sum_{z=1}^{L}\sum_{k=0}^{w-1}\sum_{i,j=1}^M q_k \mathbbm{1}[i<kM+j,z+k\leq L]=\sum_{k=0}^{w-1} \lambda_k,
\end{align*}
where $\lambda_k$ denote the average number of size-2 stopping sets between VNs lying in two SPs with difference $k$ and
\begin{equation}
\label{eq:lambdas}
\lambda_0 = L\binom{M}{2} q_0 \hspace{2.5mm} ; \hspace{2.5mm} \lambda_k = (L-k) M^2 q_k.
\end{equation}
We see that $\lambda_k \sim O(L M^{2-d_v})$.
To verify these expectations, let us consider the $\mathcal{C}_{\mathcal{R}}(3,6,3,100,M=64)$ SC-LDPC ensemble.
By averaging over $1000$ random code instances of the ensemble,
the average number of size-$2$ stopping sets is obtained $(\lambda_0,\lambda_1,\lambda_2)\approx(0.876,0.488,0.060)$ which
is close to $(0.829,0.494,0.061)$
from \eqref{eq:lambdas}, though $M$ is rather small.

\subsection{Error Floor Estimation}

We now estimate the decoding failure (block erasure probability) when there is a random burst of length $b\ll b_\BP$ and starting bit $S$. Let $\mathbb{N}_2(S,bM)$ denote the set of size-2 stopping sets formed by VNs, $v_i$, in the burst, i.e., $i\in[S,S+bM]$. BP decoding fails if these VNs are erased. Thus,
\begin{equation}
P_\B(b)\geq \mathbb{P}\{\mathbb{N}_2(S,bM)\geq 1\}\overset{(i)}{\approx} \mathbb{E}[\mathbb{N}_2(S,bM)].
\label{eq:lowerbound}
\end{equation} 
There are two approaches to justify $(i)$. The first approach is to lower-bound $\mathbb{P}\{\mathbb{N}_2(S,bM)\geq 1\}$ using  the second moment method and to show 
that the bound has a vanishing gap (in $M$) to 
$\mathbb{E}[\mathbb{N}_2(S,bM)]$. 
We applied this method in~\cite{RengaswamyZSC16} for the particular choice of $b=1$ and $S=kM+1$, $k\in[0,L-1]$. 
An alternative is to use standard arguments~\cite[App. C]{Richardson-MCT08} to approximate the distribution of size-2 stopping sets by a
joint Poisson distribution. The decoding error then corresponds approximately to the average number of stopping sets.

The starting bit $S$ is chosen uniformly among bits $[1,LM-bM+1]$. We can write $S=(z_0-1)M+j$, for $z_0\in\mathbb{N}$ and some integer $1\leq j\leq M$. Then,
\begin{align}
\textstyle
\mathbb{E}&[\mathbb{N}_2(S,bM)] = \frac{1}{LM-bM+1}\sum_{k=1}^{\mathclap{LM-bM+1}} 
\mathbb{E}[\mathbb{N}_2(S,bM)\mid S=k]\nonumber\\
&\overset{(i)}{\gtrapprox} \frac{M}{(L-b)M+1}\sum_{z_0=1}^{L-\lceil b\rceil}\frac{1}{M}\sum_{j=1}^{M} \mathbb{E}[\mathbb{N}_2((z_0-1)M+j,bM)]\nonumber \\
%
&\overset{(ii)}{=}\frac{L-\lceil b\rceil}{(L-b)M+1} 
\sum_{j=1}^{M} \mathbb{E}[\mathbb{N}_2(S=j,bM)]\nonumber\\
&\approx  \frac{1}{M}\sum_{j=1}^{M} \mathbb{E}[\mathbb{N}_2(S=j,bM)]\nonumber\\
&\overset{(iii)}{=}\frac{1}{M}\sum_{S=1}^{M} \sum_{z=1}^{\lceil b\rceil+1}\left(\binom{m_z}{2}q_0+\sum_{k=1}^{w-1} m_z m_{z+k}q_k\right).
\textstyle
\label{eq:errorfloor}
\end{align}
where $(i)$ is because we neglect a small contribution $(O(\frac{1}{L}))$ of $S>(L-\lceil b\rceil)M$ for non-integer $b$. We have $(ii)$ as $\frac{1}{M}\sum_{j=1}^{M} \mathbb{E}[\mathbb{N}_2((z_0-1)M+j,bM)]$ is identical for different $z_0$. Let us justify $(iii)$: for a given starting bit $1\leq S\leq M$, the number of erased VNs in SP $z$ is $m_z$ defined in Section~\ref{sec:asymp}. We have $(iii)$ by summing the average number of size-2 stopping sets formed between erased VNs in all pairs of SP $z$ and $z+k$.

We plot the decoding failure probability of $(3,6,w,L,M)-$SC-LDPC codes for different finite values of $M$ and for $w=3,4$ in Fig.~\ref{fig:finite363} $(a)$-$(b)$. 
For each pair of $M$ and $b$, we choose a random instance from the code ensemble and generate a random 
burst with length $bM$. The decoding failure probability, $P_\B$, is averaged over 
all trials until 400 decoding failures occur. We repeat the same experiment for $(4,8,w=4,L,M)-$SC-LDPC codes, depicted in Fig.~\ref{fig:finite363}-$(c)$.
We also plot the error floor estimation \eqref{eq:errorfloor} for each $M$.

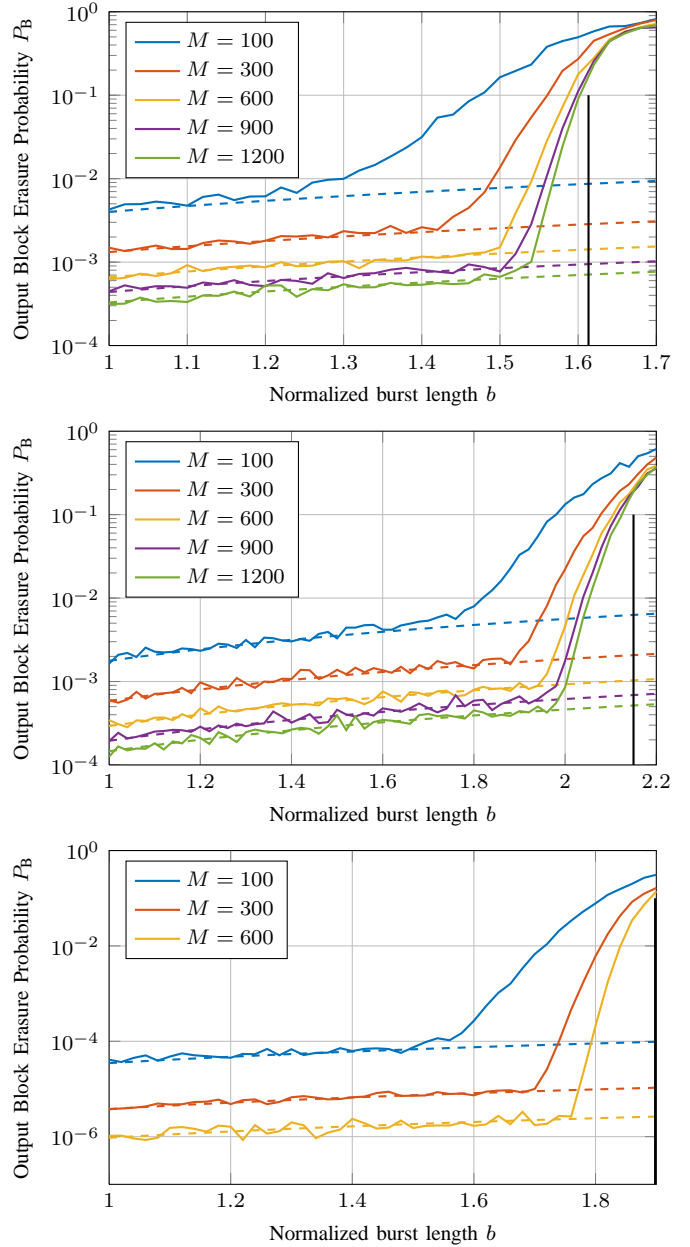
\begin{figure}[tb!]
\begin{tikzpicture}[every node/.style={font=\footnotesize}]
\begin{scope}
\begin{axis}[%
width=\columnwidth,
height=0.68\columnwidth,
every axis/.append style={font=\small},
every x tick label/.append style={font=\footnotesize},
every y tick label/.append style={font=\footnotesize},
xlabel={Normalized burst length $b$},
xmajorgrids,
ymode=log,
xmin=1,
xmax=1.7,
ymin=1e-4,
ymax=1,
yminorticks=true,
ylabel={Output Block Erasure Probability $P_{\text{B}}$},
ymajorgrids,
legend pos=north west,
legend style={legend cell align=left,align=left,draw=black},
]

\definecolor{mycolor1}{rgb}{0.00000,0.44700,0.74100}
\definecolor{mycolor2}{rgb}{0.85000,0.32500,0.09800}
\definecolor{mycolor3}{rgb}{0.92900,0.69400,0.12500}
\definecolor{mycolor4}{rgb}{0.49400,0.18400,0.55600}
\definecolor{mycolor5}{rgb}{0.46600,0.67400,0.18800}

\addplot [color=mycolor1,thick,solid,mark=none]
table[x=bval,y=SimM100] {result_RBC_PB_vs_b_dv3_dc6_w3.txt};
\addlegendentry{$M=100$};
\addplot [color=mycolor1,thick,dashed,mark=none,forget plot]
table[x=bval,y=BoundM100] {result_RBC_PB_vs_b_dv3_dc6_w3.txt};

\addplot [color=mycolor2,thick,solid,mark=none]
table[x=bval,y=SimM300] {result_RBC_PB_vs_b_dv3_dc6_w3.txt};
\addlegendentry{$M=300$};
\addplot [color=mycolor2,thick,dashed,mark=none,forget plot]
table[x=bval,y=BoundM300] {result_RBC_PB_vs_b_dv3_dc6_w3.txt};

\addplot [color=mycolor3,thick,solid,mark=none]
table[x=bval,y=SimM600] {result_RBC_PB_vs_b_dv3_dc6_w3.txt};
\addlegendentry{$M=600$};
\addplot [color=mycolor3,thick,dashed,mark=none,forget plot]
table[x=bval,y=BoundM600] {result_RBC_PB_vs_b_dv3_dc6_w3.txt};

\addplot [color=mycolor4,thick,solid,mark=none]
table[x=bval,y=SimM900] {result_RBC_PB_vs_b_dv3_dc6_w3.txt};
\addlegendentry{$M=900$};
\addplot [color=mycolor4,thick,dashed,mark=none,forget plot]
table[x=bval,y=BoundM900] {result_RBC_PB_vs_b_dv3_dc6_w3.txt};

\addplot [color=mycolor5,thick,solid,mark=none]
table[x=bval,y=SimM1200] {result_RBC_PB_vs_b_dv3_dc6_w3.txt};
\addlegendentry{$M=1200$};
\addplot [color=mycolor5,thick,dashed,mark=none,forget plot]
table[x=bval,y=BoundM1200] {result_RBC_PB_vs_b_dv3_dc6_w3.txt};

\addplot [color=black,thick,solid,mark=none,forget plot] coordinates {(1.6133,1e-4) (1.6133,0.1)} node [above,sloped,pos=0.5] {$b_{\text{BP}}$};

\end{axis}
\end{scope}

\begin{scope}[yshift=-.63\columnwidth]

\begin{axis}[%
width=\columnwidth,
height=0.68\columnwidth,
every axis/.append style={font=\small},
every x tick label/.append style={font=\footnotesize},
every y tick label/.append style={font=\footnotesize},
xlabel={Normalized burst length $b$},
xmajorgrids,
ymode=log,
xmin=1,
xmax=2.2,
ymin=1e-4,
ymax=1,
yminorticks=true,
ylabel={Output Block Erasure Probability $P_{\text{B}}$},
ymajorgrids,
legend pos=north west,
legend style={legend cell align=left,align=left,draw=black},
]

\definecolor{mycolor1}{rgb}{0.00000,0.44700,0.74100}
\definecolor{mycolor2}{rgb}{0.85000,0.32500,0.09800}
\definecolor{mycolor3}{rgb}{0.92900,0.69400,0.12500}
\definecolor{mycolor4}{rgb}{0.49400,0.18400,0.55600}
\definecolor{mycolor5}{rgb}{0.46600,0.67400,0.18800}

\addplot [color=mycolor1,thick,solid,mark=none]
table[x=bval,y=SimM100] {result_RBC_PB_vs_b_dv3_dc6_w4.txt};
\addlegendentry{$M=100$};
\addplot [color=mycolor1,thick,dashed,mark=none,forget plot]
table[x=bval,y=BoundM100] {result_RBC_PB_vs_b_dv3_dc6_w4.txt};

\addplot [color=mycolor2,thick,solid,mark=none]
table[x=bval,y=SimM300] {result_RBC_PB_vs_b_dv3_dc6_w4.txt};
\addlegendentry{$M=300$};
\addplot [color=mycolor2,thick,dashed,mark=none,forget plot]
table[x=bval,y=BoundM300] {result_RBC_PB_vs_b_dv3_dc6_w4.txt};

\addplot [color=mycolor3,thick,solid,mark=none]
table[x=bval,y=SimM600] {result_RBC_PB_vs_b_dv3_dc6_w4.txt};
\addlegendentry{$M=600$};
\addplot [color=mycolor3,thick,dashed,mark=none,forget plot]
table[x=bval,y=BoundM600] {result_RBC_PB_vs_b_dv3_dc6_w4.txt};

\addplot [color=mycolor4,thick,solid,mark=none]
table[x=bval,y=SimM900] {result_RBC_PB_vs_b_dv3_dc6_w4.txt};
\addlegendentry{$M=900$};
\addplot [color=mycolor4,thick,dashed,mark=none,forget plot]
table[x=bval,y=BoundM900] {result_RBC_PB_vs_b_dv3_dc6_w4.txt};

\addplot [color=mycolor5,thick,solid,mark=none]
table[x=bval,y=SimM1200] {result_RBC_PB_vs_b_dv3_dc6_w4.txt};
\addlegendentry{$M=1200$};
\addplot [color=mycolor5,thick,dashed,mark=none,forget plot]
table[x=bval,y=BoundM1200] {result_RBC_PB_vs_b_dv3_dc6_w4.txt};

\addplot [color=black,thick,solid,mark=none,forget plot] coordinates {(2.15,1e-4) (2.15,0.1)} node [above,sloped,pos=0.1] {$b_{\text{BP}}$};

\end{axis}
\end{scope}

\begin{scope}[yshift=-1.26\columnwidth]
\begin{axis}[%
width=\columnwidth,
height=0.68\columnwidth,
every axis/.append style={font=\small},
every x tick label/.append style={font=\footnotesize},
every y tick label/.append style={font=\footnotesize},
xlabel={Normalized burst length $b$},
xmajorgrids,
ymode=log,
xmin=1,
xmax=1.9,
ymin=1e-7,
ymax=1,
yminorticks=true,
ylabel={Output Block Erasure Probability $P_{\text{B}}$},
ymajorgrids,
legend pos=north west,
legend style={legend cell align=left,align=left,draw=black},
]

\definecolor{mycolor1}{rgb}{0.00000,0.44700,0.74100}
\definecolor{mycolor2}{rgb}{0.85000,0.32500,0.09800}
\definecolor{mycolor3}{rgb}{0.92900,0.69400,0.12500}
\definecolor{mycolor4}{rgb}{0.49400,0.18400,0.55600}
\definecolor{mycolor5}{rgb}{0.46600,0.67400,0.18800}

\addplot [color=mycolor1,thick,solid,mark=none]
table[x=bval,y=SimM100] {result_RBC_PB_vs_b_dv4_dc8_w4.txt};
\addlegendentry{$M=100$};
\addplot [color=mycolor1,thick,dashed,mark=none,forget plot]
table[x=bval,y=BoundM100] {result_RBC_PB_vs_b_dv4_dc8_w4.txt};

\addplot [color=mycolor2,thick,solid,mark=none]
table[x=bval,y=SimM300] {result_RBC_PB_vs_b_dv4_dc8_w4.txt};
\addlegendentry{$M=300$};
\addplot [color=mycolor2,thick,dashed,mark=none,forget plot]
table[x=bval,y=BoundM300] {result_RBC_PB_vs_b_dv4_dc8_w4.txt};

\addplot [color=mycolor3,thick,solid,mark=none]
table[x=bval,y=SimM600] {result_RBC_PB_vs_b_dv4_dc8_w4.txt};
\addlegendentry{$M=600$};
\addplot [color=mycolor3,thick,dashed,mark=none,forget plot]
table[x=bval,y=BoundM600] {result_RBC_PB_vs_b_dv4_dc8_w4.txt};

\addplot [color=black,thick,solid,mark=none,forget plot] coordinates {(1.898,1e-7) (1.898,0.1)} node [above,sloped,pos=0.1] {$b_{\text{BP}}$};

\end{axis}

\end{scope}

\end{tikzpicture}%
\caption{\label{fig:finite363} Simulation results for $(a)$ $d_v=3$, $d_c=6$ with $w=3$, for $(b)$ $d_v=3$, $d_c=6$ with $w=4$, and for $(c)$ $d_v=4$, $d_c=8$ with $w=4$. Solid lines represent simulation results and dashed lines the error bound.}
\end{figure}

These figures show that for $b<b_{\BP}(\eps=0)$, the error floor is well estimated by \eqref{eq:errorfloor} even for small $M=100$. It implies that the size-2 stopping sets are the main cause of decoding error. 
We also observe that the decoding error increases very fast for $b$ close to $b_{\BP}(\eps=0)$, given in Fig.~\ref{fig:b_bp36}.
For larger $M$, the waterfall region is sharper around the threshold $b_{\BP}(\eps=0)$.


%
%

\section{Conclusion}
\label{sec:conclude}

In this paper, we have investigated the performance of spatially coupled LDPC codes when the transmission is affected by a single burst of erasures per codeword. Such a burst erasure can model different scenarios, e.g., the outage of a node in distributed transmission. We have derived an expression for density evolution and shown numerically that the maximum correctable burst length depends on the channel that affects the bits not erased by the burst and the code parameters. Depending on the expected burst, different parameters may be selected to design a code. The correctable burst length is practically independent of the transmission channel of the other bits. Furthermore, we have given expressions for the error floor that remains after correction. We have successfully verified all results in a simulation example.

\bibliographystyle{IEEEtran} 
\bibliography{IEEEabrv,refer}

\end{document}